\newcommand*{\Scale}[2][4]{\scalebox{#1}{$#2$}}
\newcommand{\sbr}[1]{\lbrack \! \lbrack #1 \rbrack \! \rbrack}
\newcommand{\FeatExp}{\textit{FeatExp}}
\newcommand{\AP}{{\textit{AP}}\xspace}
\newcommand{\impskip}[1][]{\ensuremath{\mbox{\texttt{skip}}^{#1}}}
\newcommand{\impassign}[3][]{\ensuremath{{#2} ~\mbox{\texttt{:=}}^{#1}~ {#3}}}
\newcommand{\true}{\textrm{true}}
\newcommand{\false}{\textrm{false}}
\newcommand{\Kk}{\ensuremath{\mathbb{K}}}
\newcommand{\Ff}{\ensuremath{\mathbb{F}}}
\newcommand{\Zz}{\ensuremath{\mathbb{Z}}}
\newcommand{\Fff}{\mathcal{F}}
\newcommand{\Ttt}{\ensuremath{\mathcal{T}}}
\newcommand{\trans}{\textit{trans}\xspace}
\newcommand{\joinasym}{\ensuremath{\bm{\alpha}^{\textnormal{\textrm{join}}}}}
\newcommand{\joingsym}{\ensuremath{\bm{\gamma}^{\textnormal{\textrm{join}}}}}
\newcommand{\SIMPLE}{{\small \textsc{SIMPLE}}}
\newcommand{\SPIN}{\textsc{SPIN}}
\newcommand{\snip}{\ensuremath{\overline{\text{SNIP}}}}
\newcommand{\proveline}{\ensuremath{\text{ProVeLine}}}
\newcommand{\Promela}{\textsc{Promela}}
\newcommand{\ft}{\textcolor{red}{\ensuremath{d}}}
\newcommand{\fc}{\textcolor{brown}{\ensuremath{c}}}
\newcommand{\Sketch}{\textsc{Sketch}}
\newcommand{\SPLSketch}{\textsc{PromelaSketcher}}
\newcommand{\Loop}{{\small \textsc{Loop}}}
\newcommand{\Welfare}{{\small \textsc{Welfare}}}
\newcommand{\LoopCond}{{\small \textsc{LoopCond}}}
\newcommand{\Salesman}{{\small \textsc{Salesman}}}
\DeclareMathOperator*{\Max}{Max}
\DeclareMathOperator*{\Min}{Min}
\tikzset{
  treenode/.style = {align=center, inner sep=0pt, text centered,
    font=\sffamily},
  arn_n/.style = {treenode, circle, 
    draw=black, text width=2em},
  arn_r/.style = {treenode, circle, red, draw=red,
    text width=2em, very thick},
  arn_x/.style = {treenode, rectangle, draw=black,
    minimum width=1.25em, minimum height=1.25em}
}
\begin{document}
\title{Model Sketching by Abstraction Refinement for Lifted Model Checking (Extended Version)}
\titlenote{This is an extended version of the paper that was published at The 37th ACM/SIGAPP Symposium on Applied Computing (SAC '22).}

\renewcommand{\shorttitle}{Model Sketching by Abstraction Refinement for Lifted Model Checking}

\author{Aleksandar S. Dimovski}
\orcid{0000-0002-3601-2631}
\affiliation{%
  \institution{Faculty of Informatics, Mother Teresa University}
  \streetaddress{Mirche Acev, nr. 4}
  \city{Skopje}
  \country{North Macedonia}
  \postcode{1000}
}
\email{aleksandar.dimovski@unt.edu.mk}

\renewcommand{\shortauthors}{A.S.~Dimovski}

\begin{abstract}
In this work, we show how the use of verification and analysis techniques for model families (software product lines) with numerical features
provides an interesting technique to synthesize complete models from sketches (i.e.\ partial models with holes).
In particular, we present an approach for synthesizing \Promela\, model sketches using variability-specific
abstraction refinement for \emph{lifted (family-based) model checking}.
\end{abstract}

%
%
\begin{CCSXML}
<ccs2012>
 <concept>
  <concept_id>10010520.10010553.10010562</concept_id>
  <concept_desc>Theory of computation~Logic~Verification by model checking</concept_desc>
  <concept_significance>500</concept_significance>
 </concept>
 <concept>
  <concept_id>10010520.10010575.10010755</concept_id>
  <concept_desc>Software and its engineering~Software notations and tools~Software configuration management and version control systems</concept_desc>
  <concept_significance>300</concept_significance>
 </concept>
 <concept>
  <concept_id>10010520.10010553.10010554</concept_id>
  <concept_desc>Theory of computation~Models of computation</concept_desc>
  <concept_significance>100</concept_significance>
 </concept>
</ccs2012>
\end{CCSXML}

\ccsdesc[500]{Theory of computation~Logic~Verification by model checking}
\ccsdesc[300]{Software and its engineering~Software notations and tools~Software configuration management and version control systems}
\ccsdesc[100]{Theory of computation~Models of computation}

\keywords{Model sketching, Product-line (lifted) model
checking 
}

\maketitle

\section{Introduction}\label{sec:introduction}

This paper presents a novel synthesis framework for reactive models that adhere to a given set of properties.
The input is a \emph{sketch} \cite{DBLP:journals/sttt/Solar-Lezama13}, i.e.\ a partial model
with holes, where each  hole is a placeholder that can be replaced with
one of finitely many options; and a \emph{set of properties} that the model needs to fulfill.
Model sketches are represented in the \Promela\, modelling language \cite{spin} and properties are expressed in LTL \cite{katoen-beier}.
The synthesizer aims to generate as output a \emph{sketch realization}, i.e.\ a complete model instantiation, which satisfies the given properties by suitably filling the holes.


In this work, we frame the model sketching problem as a verification/analysis problem for
model families (a.k.a.\, Software Product Lines -- SPLs) \cite{model-checking-spls}, and then formulate
an abstraction refinement algorithm that operates on model families to efficiently solve it.
SPL methods and architectures allow building a family of similar models, known
as \emph{variants} (\emph{family members}), from a common code base. 
A custom variant is specified in terms of suitable \emph{features} 
selected for that particular variant at compile-time.

All possible model sketch realizations constitute a model family,
where each hole is represented by a numerical feature with the same domain.
In contrast to Boolean features that have only two values,
numerical features can have a range of numbers as explicit values.
Hence, the model sketching problem reduces to selecting correct variants (family members) from the resulting model family \cite{DAL}.
The automated analysis of such families for finding a correct variant is challenging since in addition to the state-space explosion
affecting each family member, the family size (i.e., the number of variants) typically grows exponentially in the number of features.
A naive \emph{brute force enumerative} solution is to check each individual
variant of the model family by applying an off-the-shelf model checker.
This is shown to be very inefficient for large families \cite{model-checking-spls,DBLP:conf/aosd/MidtgaardBW14}.

This paper applies an abstraction refinement procedure over the compact, all-in-one, representation of model families, called featured transition system (FTS) \cite{model-checking-spls,spin15,sttt16}, to solve the model sketching problem.
More specifically, we first devise variability abstractions tailored for model families that contain numerical features.
Variability abstractions represent a configuration-space reduction technique that
compresses the entire model family (with many configurations and variants) into an abstract model (with a single abstract configuration and variant),
so that the result of model checking
a set of LTL properties in the abstract model is preserved in all variants of the model family.
The procedure is first applied on an abstract model that represents the entire model family, and then is repeated on refined abstract models that represent suitable sub-families of the original model family.
Hence, the abstraction refinement approach \cite{spin15,sttt16,DBLP:conf/fase/DimovskiLW19,DBLP:journals/sttt/Dimovski20,DBLP:journals/tcs/DimovskiLW20} starts from considering
all possible variants, and successively splits the entire family into indecisive
and incorrect sub-families with respect to the given set of properties.
The approach is sound and complete: either a correct complete model (variant) does exist and it is computed,
or no such model exists and the procedure reports this.
Because of its special structure and possibilities for sharing of equivalent execution behaviours and model checking results
for many variants,
this algorithm is often able to converge to a solution very fast after a handful of iterations
even for sketches with large search spaces.

We have implemented our prototype model synthesizer, called \SPLSketch. 
It uses variability-specific abstraction refinement for lifted model checking
of model families with numerical features, and calls the \SPIN\, model checker \cite{spin} to verify the generated abstract models.
The abstraction and refinement are done in an efficient manner
as source-to-source transformations of \Promela\, code, which makes our procedure easy to implement/maintain as a simple
meta-algorithm script.
We illustrate this approach for
automatic completion of various \Promela\, model sketches. 
We also compare its performance with the brute-force 
approach.


\section{Model Families} \label{sec:families}

\paragraph{Featured transition system.}
Let $\Ff = \{A_1, \ldots, A_k\}$ be a finite and totally ordered set of \emph{numerical features} available
in a model family.
Let $\mathrm{dom}(A) \subseteq \Zz $  denote the set of possible values that can be assigned to feature $A$.
A valid combination of feature's values represents a \emph{configuration} $k$, which specifies
one \emph{variant} of a model family.
It is given as a \emph{valuation function} $k:\Ff \to \Zz$, which assigns a value from $\mathrm{dom}(A)$
to each feature $A$. 
We assume that only a subset \(\Kk \) of all possible configurations are \emph{valid}.
Each configuration $k \in \Kk$ can be given by a formula:
$(A_1\!=\!k(A_1)) \land \ldots \land (A_k\!=\!k(A_k))$.

A \emph{transition system} \cite{katoen-beier} is a tuple $\Ttt=(S,I,\trans,AP,L)$,
which is used to describe behaviours of single systems.
We write $s_1 \longrightarrow s_2$ whenever \((s_1,s_2) \in \trans\).
A \emph{path} of a TS $\Ttt$ is an \emph{infinite} sequence $\rho = s_0 s_1 s_2 \ldots$
with $s_0 \in I$ s.t. $s_i \stackrel{}{\longrightarrow} s_{i+1}$ for all $i \geq 0$.
The \emph{semantics} of a TS $\Ttt$, denoted $\sbr{\Ttt}_{TS}$, is the set of its paths.

A \emph{featured transition system} (FTS) represents a compact model,
which describes the behaviour of a whole family of systems 
in a single monolithic description.
The set of feature expressions, $\FeatExp(\Ff)$, are propositional logic formulae over constraints of $\Ff$:  
$\psi ::= \true \, |  \, A \bowtie n \, | \, \neg \psi \, | \, \psi \land \psi$,
where $A \in \Ff$, $n \in \Zz$, and $\bowtie \,\in \{=, <\}$.
We write $\sbr{\psi}$ 
for the set of configurations that satisfy $\psi$, i.e.\ $k \in \sbr{\psi}$ iff $k \models \psi$.
A \emph{featured transition system} (FTS) is
$\Fff\!\!=\!(S,\!I,\!\trans,AP,L,\Ff,\Kk,\delta)$, where $(S, I, \trans, AP, L)$
form a TS; $\Ff$ is a set of available features; $\Kk$ is a set of valid configurations; and
$\delta: \trans \!\to\! \FeatExp(\Ff)$ is a total function decorating transitions with presence conditions (feature expressions).
The \emph{projection} of an FTS $\Fff$ to a configuration $k \in \Kk$, denoted as $\pi_k(\Fff)$, is the TS
$(S,I,\trans',AP,L)$, where $\trans'=\{ t \in \trans \mid k \models \delta(t) \}$.
We lift the definition of \emph{projection} to sets of configurations \(\Kk' \!\subseteq\! \Kk\),
 denoted as $\pi_{\Kk'}(\Fff)$, by keeping transitions admitted by at least one of configurations in $\Kk'$.
That is, $\pi_{\Kk'}(\Fff)$, is the FTS
$(S,I,\trans',AP,L,\Ff,\Kk',\delta')$, where $\trans'=\{ t \in \trans \mid \exists k \in \Kk'. k \models \delta(t) \}$
and $\delta'$ is the restriction of $\delta$ to $\trans'$.
The \emph{semantics} of an FTS $\Fff$, denoted as $\sbr{\Fff}_{FTS}$, is the union of paths 
of the projections on all valid variants $k \in \Kk$, i.e.\ $\sbr{\Fff}_{FTS} = \cup_{k \in \Kk} \sbr{\pi_k(\Fff)}_{TS}$.

\paragraph{Abstraction.}
We start working with Galois connections
between Boolean complete lattices of feature expressions, and then induce a notion of abstraction of FTSs.
The Boolean complete lattice of feature expressions is: \((\FeatExp(\Ff)_{/\equiv},\models,\lor,\land,\true,\false,\neg)\), where
the elements of
$\FeatExp(\Ff)_{/\equiv}$ are equivalence classes of formulae \(\psi\) obtained by quotienting by the semantic equivalence $\equiv$.

The \emph{join abstraction}, $\joinasym_{\Kk} : \FeatExp(\Ff) \to\FeatExp(\emptyset)$, replaces each feature expression $\psi$ in an FTS with \true\ if there exists at least one configuration from $\Kk$ that satisfies $\psi$.
The abstract sets of features and configurations are: $\joinasym_{\Kk}(\Ff)=\emptyset$ and $\joinasym_{\Kk}(\Kk) = \{ \true \}$.
The abstraction and concretization functions
between $\FeatExp(\Ff)$ and $\FeatExp(\emptyset)$, which form a Galois connection \cite{sttt16}, are:
\[
\begin{array}{@{}ll@{}}
\Scale[0.85]{\joinasym_{\Kk}(\psi) \!=\!
  \begin{cases}
    \true  & \textrm{if } \exists k \in \Kk. k \models \psi \\
    \false & \textrm{otherwise}
  \end{cases}}, &
\Scale[0.8]{\joingsym_{\Kk}(\true) \!=\! \true, \joingsym_{\Kk}(\false) \!=\! \bigvee_{k \in 2^\Ff\!\setminus\! \Kk} k}
\end{array}
\]

Given the FTS $\Fff=(S,I,\trans,AP,L,\Ff,\Kk,\delta)$, 
we will define a TS $\joinasym_{\Kk}(\Fff)=(S,I,\trans',AP,L)$
to be its \emph{abstraction},
where $\trans' \!=\! \{ t \!\in\! \trans \mid \joinasym_{\Kk}(\delta(t)) \!=\!  \true \}$.
Note that transitions in the abstract TS $\joinasym_{\Kk}(\Fff)$ describe the behaviour that is possible in some variants of the concrete FTS $\Fff$,
but not need be realized in the other variants.
The information about which transitions are associated with which variants is lost, thus
causing a precision loss in the abstract model.
This way,
$\sbr{\joinasym_{\Kk}(\Fff)}_{TS} \supseteq \cup_{k \in \Kk} \sbr{\pi_k(\Fff)}_{TS}$.
We say that a TS $\Ttt$ satisfies a LTL formula $\phi$, written $\Ttt \models \phi$, iff all paths of $\Ttt$ satisfy formula $\phi$ \cite{katoen-beier}.
We say that an FTS $\Fff$ satisfies $\phi$, written $\Fff \models \phi$, iff all its variants satisfy $\phi$, i.e.\
$\forall k\!\in\!\Kk. \, \pi_k(\Fff) \models \phi$.

\begin{theorem}[Preservation results, \cite{sttt16}] \label{theorem:sound}
For every $\phi \in LTL$ \cite{katoen-beier}, $\joinasym_{\Kk}(\Fff) \models \phi \, \implies \, \Fff \models \phi$.
\end{theorem}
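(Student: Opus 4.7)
The plan is to reduce the theorem to a single semantic inclusion, namely $\sbr{\joinasym_{\Kk}(\Fff)}_{TS} \supseteq \bigcup_{k \in \Kk} \sbr{\pi_k(\Fff)}_{TS}$, which is essentially stated just before the theorem. Once this inclusion is established, LTL satisfaction transfers along path-set inclusion in a completely routine way, because $\Ttt \models \phi$ is defined as $\forall \rho \in \sbr{\Ttt}_{TS}.\, \rho \models \phi$, so any supersemantics that satisfies $\phi$ forces every subsemantics to satisfy $\phi$ as well.

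First, I would unfold the definitions. Assume $\joinasym_{\Kk}(\Fff) \models \phi$, i.e.\ every path $\rho \in \sbr{\joinasym_{\Kk}(\Fff)}_{TS}$ satisfies $\phi$. Fix an arbitrary valid configuration $k \in \Kk$; to show $\Fff \models \phi$ it suffices to prove $\pi_k(\Fff) \models \phi$. Pick any path $\rho = s_0 s_1 s_2 \ldots \in \sbr{\pi_k(\Fff)}_{TS}$. Each consecutive pair $(s_i, s_{i+1})$ is a transition $t$ of $\Fff$ with $k \models \delta(t)$, by the definition of $\pi_k$. Since $k \in \Kk$ witnesses satisfiability of $\delta(t)$, the abstraction function evaluates to $\joinasym_{\Kk}(\delta(t)) = \true$, so $t$ is kept in the abstract TS $\joinasym_{\Kk}(\Fff)$. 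Hence $\rho$ is also a path of $\joinasym_{\Kk}(\Fff)$, and by assumption $\rho \models \phi$. As $\rho$ and $k$ were arbitrary, $\pi_k(\Fff) \models \phi$ for every $k \in \Kk$, which is exactly $\Fff \models \phi$.

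The only nontrivial step is the path-set inclusion, but it is essentially a transition-by-transition check using the definition of $\joinasym_{\Kk}$: any transition labelled by a presence condition satisfied by some valid configuration $k$ survives the abstraction, so every concrete projection is a sub-TS (on the same states and initial states) of $\joinasym_{\Kk}(\Fff)$, hence every path lifts. I do not expect genuine difficulty here; the only subtlety is to note that paths are infinite and that the inclusion holds at the level of the underlying transition relation, which yields the inclusion on infinite path sets by a straightforward induction on prefixes. Everything else is bookkeeping with the definitions of $\pi_k$, $\joinasym_{\Kk}$, and the universal LTL semantics.
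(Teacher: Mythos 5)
Your proof is correct and follows exactly the route the paper itself indicates: it establishes the path-set inclusion $\sbr{\joinasym_{\Kk}(\Fff)}_{TS} \supseteq \bigcup_{k \in \Kk} \sbr{\pi_k(\Fff)}_{TS}$ (stated in the paper just before the theorem) via the transition-level observation that $k \models \delta(t)$ for $k \in \Kk$ forces $\joinasym_{\Kk}(\delta(t)) = \true$, and then transfers LTL satisfaction along the inclusion using the universal path semantics. The paper cites the full proof from prior work rather than spelling it out, but your argument is precisely the intended one.
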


The problem of evaluating $\Fff \models \phi$ can be reduced to a number of smaller problems by
partitioning the configuration space $\Kk$.
Let the subsets $\Kk_1, \Kk_2, \ldots, \Kk_n$ form a \emph{partition} of $\Kk$.
Then, $\Fff \models \Phi$ iff $\pi_{\Kk_i}(\Fff) \models \phi$ for all $i=1,\!\ldots\!,n$.

\paragraph{Abstraction Refinement Framework.}
The abstraction refinement procedure \texttt{ARP} for checking $\Fff \models \phi$
is illustrated by Algorithm~\ref{Algorithm1}.
We first construct an initial abstract model $\joinasym_{\Kk}(\Fff)$,
and check $\joinasym_{\Kk}(\Fff) \models \phi$ (Line 1).
If the abstract model satisfies the given property (i.e., the counterexample $c$ is \texttt{null}), then all variants from \Kk\, satisfy it and we stop.
In this case, the global variable \texttt{end} is also set to \true\, making all other recursive calls to \texttt{ARP} to end (Lines 2, 6, 10).
 Otherwise, a non-\texttt{null} counterexample $c$ is found.
Let $\psi$ be the feature expression computed by conjoining feature expressions labelling all transitions that
belong to path $c$ when $c$ is simulated in $\Fff$ (Line 3).
There are two cases to consider.

\begin{algorithm}[t]
\SetKwInOut{Global}{Global}
\KwIn{An FTS $\Fff$, a configuration set $\Kk$, and an LTL formula $\phi$}
\KwOut{Correct variants $k \in \Kk$, s.t. $\pi_k(\Fff) \models \phi$}
\Global{$\texttt{end:=}\false$}
$c = (\joinasym_{\Kk}(\Fff) \models \phi)$ \;
\lIf{($c\texttt{=null}$)}  {
\{$\texttt{end:=}\true$; \bf{return} $\Kk$ \} }
$\psi \texttt{:=FeatExp(c)}$ \;
\eIf{($\texttt{sat}(\psi \!\land\! (\bigvee_{k \in \Kk}k))$)}  {
$(\psi_1,\ldots,\psi_n) \texttt{:= Split}(\sbr{\neg \psi} \cap \Kk)$ \;
\lIf{(\texttt{end})}{\bf{return} $\emptyset$}
$ARP(\pi_{\sbr{\psi_1}}(\Fff),\sbr{\psi_1},\phi); \ldots; ARP(\pi_{\sbr{\psi_n}}(\Fff),\sbr{\psi_n},\phi)$
}
{
$\psi' = \texttt{CraigInterpolation}(\psi,\Kk)$\;
\lIf{(\texttt{end})}{\bf{return} $\emptyset$}
$ARP(\pi_{\sbr{\psi'}}(\Fff),\sbr{\psi'},\phi); \, ARP(\pi_{\sbr{\neg \psi'}}(\Fff),\sbr{\neg \psi'},\phi)$
}

\caption{{\bf \texttt{ARP($\Fff,\Kk,\phi$)}} \label{Algorithm1}}
\end{algorithm}

First, if $\psi \land (\bigvee_{k \in \Kk}k)$ is satisfiable (i.e.\ $\Kk \cap \sbr{\psi} \neq \emptyset$),
then the found counterexample $c$ is \emph{genuine} for
variants in $\Kk \cap \sbr{\psi}$. For the other variants from $\Kk \cap \sbr{\neg \psi}$, the found counterexample cannot
be executed (Lines 5,6,7). We call \texttt{Split} to split
the space $\Kk \cap \sbr{\neg \psi}$ in sub-families $\sbr{\psi_1}, \ldots, \sbr{\psi_n}$, 
such that all atomic constraints in $\psi_i$ are of the form: $A \bowtie n$, where $A \in \Ff$ and $n \in dom(A)$.
In particular, the  \texttt{Split} function takes as input a set of configurations and returns a list of sets of configurations.
For example, assume that we have two numerical features $\Min \!\leq\! \texttt{A} \!\leq\! \Max$ and $\Min \!\leq\! \texttt{B} \!\leq\! \Max$.
If $\psi=(\texttt{A}\!=\!3)$,
then \texttt{Split}($\sbr{\neg \psi}$) is $(\Min \!\leq\! \texttt{A} \!\leq\! 2) \land (\Min \!\leq\! \texttt{B} \!\leq\! \Max)$ and $(4 \!\leq\! \texttt{A} \!\leq\! \Max) \land (\Min \!\leq\! \texttt{B} \!\leq\! \Max)$.
Finally, we call \texttt{ARP} to verify the sub-families:
$\pi_{\sbr{\psi_1}}(\Fff), \ldots, \pi_{\sbr{\psi_n}}(\Fff)$. 
Note that if $\Kk \cap \sbr{\neg \psi} = \emptyset$, \texttt{Split} updates variable \texttt{end} to \true\ 
and so no recursive \texttt{ARP}s are called.

Second, if $\psi \land (\bigvee_{k \in \Kk}k)$ is unsatisfiable (i.e.\ $\Kk \cap \sbr{\psi} = \emptyset$), then the found counterexample $c$ is \emph{spurious} for all variants in $\Kk$ (due to incompatible feature expressions) (Lines 9,10,11).
A feature expression $\psi'$ used for constructing refined sub-families is determined
by means of Craig interpolation \cite{DBLP:conf/tacas/McMillan05} from $\psi$ and $\Kk$.
First, we find the minimal unsatisfiable core $\psi^c=X \land Y=\false$ of $\psi \!\land\! (\bigvee_{k \in \Kk}k)$.
Next, the interpolant $\psi'$ is computed, such that
$\psi'$ summarizes and translates why $X$ is inconsistent with $Y$ in their shared language.
Finally,
we call the \texttt{ARP} to check $\pi_{\sbr{\psi'}}(\Fff) \models \phi$ and
$\pi_{\sbr{\neg \psi'}}(\Fff) \models \phi$.
By construction, it is guaranteed that the spurious counterexample $c$ does not occur in both $\pi_{\sbr{\psi'}}(\Fff)$
and $\pi_{\sbr{\neg \psi'}}(\Fff)$ \cite{fase17}.

Note that abstract models we obtain are ordinary TSs where all feature expressions 
 are replaced with $\true$. Therefore, the verification step $\joinasym_{\Kk}(\Fff) \models \phi?$ (Line 1) can be performed using a single-system model checker such as \SPIN.
Also note that we call \texttt{ARP} until we find a correct variant (variable \texttt{end} is set to \true) or the updated set of configurations $\Kk$
becomes empty. Therefore,
\texttt{ARP($\Fff,\Kk,\phi$)} terminates and is correct.

\section{Syntactic Transformations} \label{sec:sketches}

We now present the high-level modelling language \Promela\, for writing sketches and model families. 
Then, we describe several transformations of \Promela\, sketches and model families.


\paragraph{Syntax of \Promela.}
\Promela\ \cite{spin} is a non-deterministic modelling language designed for
describing systems composed of concurrent processes that communicate
asynchronously.
The basic statements of
processes are given by:
$$
\begin{array}{@{}l}
stm ::=  \impskip \!\mid\! \texttt{break} \!\mid\! \impassign{\texttt{x}\!}{\!expr} \!\mid\! c \texttt{?} x \!\mid\! c \texttt{!} expr \!\mid\! stm_1\texttt{;} stm_2 \!\mid\!  \\
\ \!\texttt{if}\!::\!g_1 \!\rightarrow\! stm_1\cdots\!::\!g_n \!\rightarrow\! stm_n \, \texttt{fi} \!\mid\! \texttt{do}\!::\!g_1 \!\rightarrow\! stm_1\cdots\!::\!g_n \!\!\rightarrow\!\! stm_n \texttt{od}
\end{array}
$$
where \texttt{x} is a variable, $expr$ is an expression, $c$ is a channel, and $g_i$ are conditions
over variables and contents of channels.

\paragraph{Sketches.}
To encode sketches, a single sketching construct of type expression  is included: a basic integer hole denoted by \texttt{??}.
Each hole occurrence is assumed to be uniquely labelled as $\texttt{??}_i$, and it has a bounded integer domain $[n_i,n'_i]$.

\paragraph{Model Families.}
To encode multiple variants,
a new compile-time guarded-by-features statement is included:
\[
stm ::=  \ldots \mid \texttt{\#if}:: \psi_1 \rightarrow stm_1 \ldots ::\psi_n \rightarrow stm_n \ \texttt{\#endif}
\]
where $\psi_1, \ldots, \psi_n$ are
feature expressions defined over $\Ff$.
The ``$\texttt{\#if}$'' statement contains feature expressions $\psi_i \in \FeatExp(\Ff)$ as presence conditions
(guards).
If presence condition $\psi_i$ is satisfied by a configuration $k \in \Kk$ the statement $stm_i$
will be included in the variant corresponding to $k$.
Hence, ``\texttt{\#if}'' plays the same role as
``$\texttt{\#if}$'' directives in C preprocessor CPP~\cite{DBLP:conf/oopsla/KastnerGREOB11,DBLP:conf/fase/DimovskiAL21,DBLP:journals/scp/DimovskiAL22}.
The semantics of \Promela\ models and \Promela\ model families are given in \cite{spin,sttt16}.

\paragraph{Syntactic Transformations.}
Our aim is to transform an input sketch $\hat{P}$ with a set of $m$ holes $\texttt{??}_{1}^{[n_1,n_1']}, \ldots, \texttt{??}_{m}^{[n_m,n_m']}$,
into an output model family $\overline{P}$ with a set of numerical features $A_1, \ldots, A_m$ with domains $[n_1,n_1'], \ldots, [n_m,n_m']$. The set of configurations $\Kk$ includes all possible combinations of feature's values.
The rewrite rule for eliminating holes \texttt{??} from a model sketch is of the form:
\begin{equation} \label{transf.1}
\begin{array}{@{}l@{}}
\Scale[0.9]{ \!stm[\texttt{??}^{[n,n']}] ~\leadsto~
 \texttt{\#if}\!::\! \texttt{(A=}n) \rightarrow stm[n] \ldots \!::\!\texttt{(A=}n') \rightarrow stm[n'] \, \texttt{\#endif} } \tag{R-1}
\end{array}
\end{equation}
where $stm[-]$ is a (non-compound) basic statement with a single expression $-$ in it, $\texttt{??}^{[n,n']}$ is an occurrence of a hole with domain $[n,n']$, and \texttt{A} is a fresh numerical feature with domain $[n,n']$.
The meaning of the rule (\ref{transf.1}) is that if the current sketch being transformed
matches the abstract syntax tree node of the shape $stm[\texttt{??}^{[n,n']}]$ then replace $stm[\texttt{??}^{[n,n']}]$
according to the rule (\ref{transf.1}).

We write $\texttt{Rewrite}(\hat{P})$ to be the final model family
obtained by repeatedly applying the rule (\ref{transf.1}) on sketch $\hat{P}$ and
on its transformed versions until we reach a point where it can not be applied. 

We now present the syntactic transformations of model families $\overline{P} = \texttt{Rewrite}(\hat{P})$ obtained
from \Promela\, sketches $\hat{P}$.
We consider two transformations: projection $\pi_{\sbr{\psi}}(\overline{P})$ and variability abstraction $\joinasym_{\Kk}(\overline{P})$.
Let $\overline{P}$ represent a model family. 

The projection $\pi_{\sbr{\psi}}(\overline{P})$ is obtained by defining a translation
recursively over the structure of $\psi$.
Let $\psi$ be of the form ($\texttt{A} \!<\! m$).
The rewrite rule is of the form:
\begin{equation} \label{transf.2}
\begin{array}{@{}l@{}}
\Scale[0.86]{\texttt{\#if}\!::\!\! \texttt{(A=}n) \!\to\! stm[n] \!\ldots \!::\!\texttt{(A=}m) \!\to\! stm[m] \ldots \!::\!\texttt{(A=}n') \!\to\! stm[n'] \texttt{\#endif} \!\leadsto\! } \\
\Scale[0.86]{ \texttt{\#if}\!::\!\! \texttt{(A=}n) \!\to\! stm[n] \!\ldots \!::\! \false \!\to\! stm[m] \ldots \!::\! \false \to stm[n'] \texttt{\#endif} }
 \tag{{\small R-2}}
\end{array}
\end{equation}
where $n \leq m \leq n'$.
That is, all guards that do not satisfy ($\texttt{A} \!<\! m$) are replaced with \false.
Let $\psi$ be a feature expression of the form $\neg \psi'$.
We first transform $\overline{P}$ by applying the projection $\psi'$, 
then in all \texttt{\#if}-s obtained from the projection $\psi'$ we change the guards:
those guards of the form $(A=m')$ become \false, and \false\, guards are returned to the form $(A=m')$ by looking at a special memo list where we keep record of them.
Let $\psi$ be 
of the form $\psi_1 \land \psi_2$.
Then, we apply projections $\psi_1$ and $\psi_2$ one after the other.

The abstract model $\joinasym_{\Kk}(\overline{P})$ is obtained by appropriately
resolving all ``$\texttt{\#if}$''-s.
The rewrite rule is: 
\begin{equation} \label{transf.3}
\begin{array}{@{}l@{}}
\texttt{\#if}:: (\psi_1) \rightarrow stm_1 :: \ldots ::(\psi_n) \rightarrow stm_n \texttt{\#endif} \leadsto \\
\qquad \texttt{if}:: \joinasym_{\Kk}(\psi_1) \rightarrow stm_1 :: \ldots :: \joinasym_{\Kk}(\psi_n) \rightarrow stm_n \, \texttt{fi} \tag{R-3}
\end{array}
\end{equation}
where all guards in the new $\texttt{if}$ are set to $true$ or $false$ depending whether there is some valid configurations
that satisfies that guard. 

The correctness of these transformations are formally proved by structural induction on $\hat{P}$ and $\overline{P}$
(see Theorems~\ref{thm:rewrite} and ~\ref{theo:transform} in App.~\ref{app:proofs}).

\section{Synthesis Algorithm} \label{sec:synthesis}

We can now encode the sketch synthesis problem as a lifted model checking problem.
In particular, we delegate the effort of conducting an effective search of
all possible sketch realizations to an efficient abstraction refinement for lifted model checking.
Once the lifted model checking of the corresponding model family is performed,
we can see for which variants the given property is correct.
Those variants represent the correct sketch
realizations. 

The synthesis algorithm $\textrm{SYNTHESIZE}(\hat{P},\phi)$ for solving a sketch $\hat{P}$ is the following.
The sketch $\hat{P}$ is first encoded as a model family $\overline{P} = \texttt{Rewrite}(\hat{P})$.
Then, we call function \texttt{ARP($\overline{P},\Kk,\phi$)},
which takes as input the model family $\overline{P}$, its configuration set $\Kk$, and the property to verify $\phi$,
and returns as solution a set of variants $\Kk' \subseteq \Kk$ that satisfy $\phi$ obtained
after performing the ARP. 
The correctness and termination of $\textrm{SYNTHESIZE}(\hat{P},\phi)$ are shown in Theorem~\ref{theo:synth} in App.~\ref{app:proofs}.

\section{Evaluation} \label{sec:evaluation}



\newcommand{\ra}[1]{\renewcommand{\arraystretch}{#1}}

\begin{table*}[t]
\ra{0.9}
\caption{Performance results. 
All times in sec.
}\label{fig:performance1}
\vspace{-2mm}
\begin{tabular}{@{}l|cl|cl|cl|cl|cl|cl@{}}\toprule
\multirow{2}{*}{Bench.} & \multicolumn{4}{c|}{3 bits}  &  \multicolumn{4}{c|}{4 bits}   & \multicolumn{4}{c}{8 bits} \\
\cmidrule{2-5} \cmidrule{6-9} \cmidrule{10-13}
     & \multicolumn{2}{c|}{\scriptsize{\textsc{PromelaSketcher}}}  &  \multicolumn{2}{c|}{\footnotesize{\textsc{Brute-force}}} & \multicolumn{2}{c|}{\scriptsize{\textsc{PromelaSketcher}}}  & \multicolumn{2}{c|}{\footnotesize{\textsc{Brute-force}}}  & \multicolumn{2}{c|}{\scriptsize{\textsc{PromelaSketcher}}} & \multicolumn{2}{c}{\footnotesize{\textsc{Brute-force}}} \\
     & \scriptsize{\textsc{Calls}} & \scriptsize{\textsc{Time}} & \scriptsize{\textsc{Calls}} & \scriptsize{\textsc{Time}} &
     \scriptsize{\textsc{Calls}} & \scriptsize{\textsc{Time}} & \scriptsize{\textsc{Calls}} & \scriptsize{\textsc{Time}} &
     \scriptsize{\textsc{Calls}} & \scriptsize{\textsc{Time}} & \scriptsize{\textsc{Calls}} & \scriptsize{\textsc{Time}} \\
     \midrule
\footnotesize{\texttt{\SIMPLE}} & \footnotesize{2}   & \footnotesize{0.319} & \footnotesize{8} & \footnotesize{0.648} & \footnotesize{2} & \footnotesize{0.351}  & \footnotesize{16} & \footnotesize{1.250} & \footnotesize{2}   & \footnotesize{0.373} & \footnotesize{256} & \footnotesize{19.24} \\
  \footnotesize{\texttt{\Loop}} &  \footnotesize{4}   & \footnotesize{0.638} & \footnotesize{8} & \footnotesize{0.614} & \footnotesize{4} & \footnotesize{0.658}  & \footnotesize{16} & \footnotesize{1.228} & \footnotesize{4}   & \footnotesize{1.667} & \footnotesize{256} & \footnotesize{18.95}     \\ 
\footnotesize{\texttt{\LoopCond}} &   \footnotesize{2}   & \footnotesize{0.392} & \footnotesize{8} & \footnotesize{0.639} & \footnotesize{2} & \footnotesize{0.448}  & \footnotesize{16} & \footnotesize{1.251} & \footnotesize{2}   & \footnotesize{0.778} & \footnotesize{256} & \footnotesize{19.64}       \\ 
 \footnotesize{\texttt{\Welfare}} &  \footnotesize{4}   & \footnotesize{0.660} & \footnotesize{8} & \footnotesize{0.650} & \footnotesize{5} & \footnotesize{0.923}  & \footnotesize{16} & \footnotesize{1.205} & \footnotesize{10}   & \footnotesize{1.476} & \footnotesize{256} & \footnotesize{19.69}     \\ \footnotesize{\texttt{\Salesman}} &  \footnotesize{2}   & \footnotesize{0.406} & \footnotesize{8} & \footnotesize{0.689} & \footnotesize{2} & \footnotesize{0.417}  & \footnotesize{16} & \footnotesize{1.359} & \footnotesize{2}   & \footnotesize{0.424} & \footnotesize{256} & \footnotesize{19.41}     \\
\bottomrule
\end{tabular}
\end{table*}

\paragraph*{Implementation}
We have developed a prototype model synthesizer, 
called \SPLSketch, for resolving \Promela\, sketches.
It uses the ANTLR parser generator \cite{ANTLR}
for processing \Promela\, code, while projections and
abstractions of \texttt{\#if}-enriched \Promela\, code are implemented using source-to-source
transformations. 
It calls the \SPIN\, \cite{spin} to verify the generated \Promela\, models.
If a counterexample trace is returned, the tool inspects the trace by using \SPIN's
simulation mode, and generates refined abstractions.
Our tool is written in \textsc{Java} and consists of around 2K LOC.

\paragraph*{Experiment setup and Benchmarks}
All experiments are executed on a 64-bit Intel$^\circledR$Core$^{TM}$
i5 CPU, Lubuntu VM, with 8 GB memory.
The implementation, benchmarks, and all results are available from: \url{https://github.com/aleksdimovski/Promela_sketcher}.
We compare our approach with the \textsc{Brute force} enumeration approach
  that generates all possible sketch realizations and verifies them using \SPIN\, one by one.
For each experiment, we report: \textsc{Time} which is the total
time to resolve a sketch in seconds; and \textsc{Calls} which is the number of
times \SPIN\, is called. We show performances for three different sizes of holes:
3-, 4-, and 8-bits.
We only measure the model checking \SPIN\, times
to generate a process analyser (pan) and to execute it.
We do not count the time for compiling pan, as it is due to a design decision in
\SPIN\, rather than its verification algorithm.
The evaluation is performed on several suitably adjusted
 \Promela\, sketches collected from the \Sketch\, project
 \cite{DBLP:journals/sttt/Solar-Lezama13}, SyGuSComp \cite{DBLP:conf/fmcad/AlurBJMRSSSTU13},
and \SPIN\, \cite{spin} (see benchmarks in App.~\ref{app:bench}).

\paragraph*{Performance Results}
Table~\ref{fig:performance1}  shows the results of synthesizing our benchmarks.

\SPLSketch\, needs two iterations and two calls to \SPIN\, to resolve the \SIMPLE\, sketch given in Fig.~\ref{fig:sketch1}
by reporting that the  hole \texttt{??} can be replaced with an integer value from $[Min,2]$.

Hence, it significantly outperforms the \textsc{Brute force} approach.

The \Loop\, sketch \cite{DBLP:journals/sttt/Solar-Lezama13} in Fig~\ref{fig:bench1}
contains one hole \texttt{??} represented by feature $A$.
The coarsest abstract model has an \texttt{if} statement with one optional sequence
`$\texttt{do} :: (\texttt{x>n}) \to \ldots \texttt{od}$' for each possible value \texttt{n} of feature $A$.
\SPLSketch\, reports counterexamples for the cases $(A\!=\!Min), \ldots, (A\!=\!4)$, and then
we obtain the correct solution for the abstraction $(5 \!\leq\! A \!\leq\! Max)$.
We have slightly changed \Loop\, by replacing \texttt{x:=10} with  \texttt{x:=??},
thus obtaining a sketch \Loop'\, with two holes represented by two features $A_1$ and $A_2$.
All reported counterexamples have specific values for $A_1$ and $A_2$, which are
used to define refined abstract models.

\SPLSketch\, needs two iterations to resolve the \LoopCond\, sketch \cite{DBLP:journals/sttt/Solar-Lezama13} in Fig~\ref{fig:bench2}, where we use feature $A$ for the hole \texttt{??}.
In the first iteration, \SPIN\, reports an error trace 
that corresponds to the case when $(A=2)$.
In the next iteration, we obtain the correct answer for the abstraction $(A \leq 1)$.

The \Welfare\, sketch \cite{spin} in Fig~\ref{fig:bench3} is a problem due to Feijen. 
There are three ordered lists of integers \texttt{a}, \texttt{b}, and \texttt{c}.
At least one element appears in all three lists.
Find the smallest indices \texttt{i}, \texttt{j}, and \texttt{k}, such that \texttt{a[i]=b[j]=c[k]}.
That is, we want to find the first element that appears in all three lists.
The list \texttt{c} is initialized in such a way that concrete values assigned to the first $n-1$ elements
do not appear in lists \texttt{a} and \texttt{b}, and the last $n$-th element of \texttt{c} is assigned to the hole \texttt{??}.
Hence, the hole \texttt{??} should be replaced with the smallest value that appear also
in lists \texttt{a} and \texttt{b}.
\SPLSketch\, successfully partitions the configuration space and finds the
correct solutions for various values assigned to lists \texttt{a}, \texttt{b}, and \texttt{c}.
The number of iterations needed depends on the content of \texttt{a}, \texttt{b}, and \texttt{c}.

The \Salesman\, sketch \cite{spin} is a well-known optimisation problem, whose \Promela\, solution is given in Fig.~\ref{fig:bench4}.
Given a list of \texttt{N} cities and \texttt{distances} between each pair of cities,
it asks to find the shortest possible \texttt{tour} that visits each city and returns to
the origin city.
We now use our approach to find the shortest tour through the cities.
We initialize variable \texttt{MAX} to an integer hole \texttt{??}.
Whenever there exists a shorter tour than the one assigned to \texttt{MAX},
the given LTL property \texttt{p} fails and a counterexample is reported.
Therefore, the LTL property \texttt{p} will be correct only when \texttt{MAX} is
initialized to the value less or equal to the shortest possible tour.
\SPLSketch\, successfully finds this value for \texttt{??} in only two iterations.
In the first iteration, it reports a counterexample with a tour of length ($n+1$) that is greater than the shortest possible tour that is of length $n$.
Then, in the second iteration, the abstraction $(A \leq n)$ satisfies the property \texttt{p}.

We can see from Table~\ref{fig:performance1} that \SPLSketch\, significantly
outperforms \textsc{Brute force}. On our benchmarks,
it translates to speed ups that range from 1.2$\times$ to 3.5$\times$ for 4-bits holes,
and from 11.5$\times$ to 51.4$\times$ for 8-bits holes.
This is due to the fact that the number of calls to \SPIN\, and the number of
partitionings of \Kk\, that share the same counterexamples or correct traces in \SPLSketch\,
are much less than the configuration space \Kk\, that is inspected one by one using \SPIN\,
in the \textsc{Brute force}. 


\section{Conclusion}

In this paper, we employ techniques from product-line lifted model checking for
automatically resolving of model sketches.
By means of an implementation and a number of experiments,
we confirm that our technique is effective and works well on a
variety of \Promela\, benchmarks and LTL properties.

\bibliographystyle{ACM-Reference-Format}
\bibliography{ms}

\appendix

\section{Proofs} \label{app:proofs}

Let $H$ be a set of holes in the sketch $\hat{P}$. We define a \emph{control function}
$\varphi : \Phi = H \to \Zz$ to describe the value of each hole in $\hat{P}$. 
We write $\sbr{\hat{P}}^{\varphi}_{TS}$ for TS obtained by replacing holes in $\hat{P}$ according to $\varphi$.
\begin{theorem} \label{thm:rewrite}
Let $\hat{P}$ be a sketch and $\varphi$ be a control function, s.t.
features $A_1, \ldots, A_n$ correspond to holes $\texttt{??}_1, \ldots, \texttt{??}_n$. 
We define a configuration $k \in \Kk$, s.t. $k(A_i)=\varphi(\texttt{??}_i)$ for $1 \!\leq\! i \leq n$.
Let $\overline{P} = \texttt{Rewrite}(\hat{P})$. We have: $\sbr{\hat{P}}^{\varphi}_{TS} \equiv \sbr{\pi_k(\sbr{\overline{P}}_{FTS})}_{TS}$.
\end{theorem}
\begin{proof}
By induction on the structure of $\hat{P}$.
The only interesting case is a basic statement $stm[\texttt{??}_i]$ for rule (\ref{transf.1}), since in all
other cases we have identity transformations.
\[
\begin{array}{@{}l@{}}
\Scale[0.86]{\sbr{\pi_{k}(\sbr{\texttt{\#if}\!::\! \texttt{(A}_i=n) \rightarrow stm[n] \ldots \!::\!\texttt{(A}_i=n') \rightarrow stm[n'] \, \texttt{\#endif}
}_{FTS})}_{TS} } \\
\Scale[0.86]{  \overset{\text{def. of $\pi_k$}} = \sbr{stm[k(\texttt{A}_i)]}_{TS} \overset{\text{hypoth.}}= \sbr{stm[\varphi(\texttt{??}_i)]}_{TS} \overset{\text{def. of \texttt{??}}}= \sbr{stm[\texttt{??}_i]}^{\varphi}_{TS} }
\end{array}
\]
\end{proof}

\begin{theorem} \label{theo:transform}
Let $\overline{P}$ be a \Promela\, family and $\sbr{\overline{P}}_{FTS}$ be its FTS.
Then: $\pi_{\sbr{\psi}}(\sbr{\overline{P}}_{FTS}) \!\equiv\! \sbr{\pi_{\sbr{\psi}}(\overline{P})}_{FTS}$ and
$\joinasym_{\Kk}(\sbr{\overline{P}}_{FTS}) \!\equiv\! \sbr{\joinasym_{\Kk}(\overline{P})}_{TS}$.
\end{theorem}
\begin{proof}
By induction on the structure of $\overline{P}$.
The only interesting case is ``$\texttt{\#if}$'',
since in all other cases we have an identity translation.
We can see that projection $\pi_{\sbr{\psi}}$ and abstraction $\joinasym_{\Kk}$ are applied on feature
expressions $\psi$, which can be introduced in FTSs only through ``$\texttt{\#if}$''-s.
Thus, it is the same whether $\pi_{\sbr{\psi}}$ and $\joinasym_{\Kk}$ are applied directly on FTS $\sbr{\overline{P}}_{FTS}$ after the FTS is built
by following the operational semantics of ``$\texttt{\#if}$'', or $\pi_{\sbr{\psi}}$ and $\joinasym_{\Kk}$ are first applied
on ``$\texttt{\#if}$''-s using rules (\ref{transf.2}), (\ref{transf.3}) and then FTS is built.
\end{proof}

\begin{theorem} \label{theo:synth}
\texttt{SYNTHESIZE}$(\hat{P},\phi$) is correct and terminates.
\end{theorem}
\begin{proof}
The procedure
\texttt{SYNTHESIZE($\hat{P},\phi$)} terminates since all steps in it are terminating.
The correctness of \texttt{SYNTHESIZE($\hat{P},\phi$)} follows from the correctness of \texttt{Rewrite} (see Theorem~\ref{thm:rewrite}),
\texttt{ARP} and syntactic transformations (see Theorem~\ref{theo:transform}).
\end{proof}

\section{Benchmarks} \label{app:bench}

\begin{figure*}[h]
\centering
\begin{minipage}[b]{0.4\textwidth}
\centering
	$\begin{array}{ll}
\hline
 & \texttt{init} \ \{ \\
 \quad & \quad \texttt{byte} \ \texttt{x}; \, \texttt{int} \ \texttt{y}; \\
 \quad & \quad \texttt{do} :: \texttt{break} \, :: \texttt{x++} \ \texttt{od}; \\
\quad & \quad \impassign{\texttt{y}}{\texttt{??*x};} \\
\quad & \quad \texttt{assert} \, (\texttt{y} \leq \texttt{x+x}) \, \} \\
\hline
	\end{array}$
\vspace{-1mm}
\caption{\SIMPLE\, sketch.}
\label{fig:sketch1}
\end{minipage}
\begin{minipage}[b]{0.45\linewidth}
	$\begin{array}{@{} l @{}}
\hline
  \texttt{init} \ \{ \\
  \quad  \texttt{byte} \ \texttt{x:=10}; \\
  \quad  \texttt{int} \ \texttt{y:=0}; \\
  \quad  \texttt{do} :: (\texttt{x>??}) \to \texttt{x--}; \\
  \quad \qquad \qquad \qquad \ \texttt{y++} \\
  \quad \quad \ :: \texttt{else} \to \texttt{break} \\
  \quad  \texttt{od}; \\
 \quad  \texttt{assert} \, (\texttt{y} < \texttt{6}) \, \} \\
\hline
	\end{array}$
\vspace{-1mm}
\caption{\Loop\, sketch.}
\label{fig:bench1}
\end{minipage}
\\
\begin{minipage}[b]{0.4\linewidth}
$\begin{array}{@{} l @{}}
\hline
  \texttt{init} \ \{ \\
  \quad  \texttt{byte} \ \texttt{x}; \texttt{int} \ \texttt{y:=0}; \\
  \quad  \texttt{do} :: \texttt{break} \, :: \texttt{x++} \ \texttt{od}; \\
  \quad  \texttt{do} :: (\texttt{x>0}) \to \texttt{x--};  \\
  \quad \quad \qquad \texttt{if} :: (\texttt{y<??}) \to \texttt{y++} \\
  \quad \qquad \qquad \ :: \texttt{else} \to \texttt{y--} \ \texttt{fi}; \\
  \quad \quad \ :: \texttt{else} \to \texttt{break} \  \texttt{od}; \\
  \quad  \texttt{assert} \, (\texttt{y} \leq \texttt{1}) \, \} \\
\hline
\end{array}$
\vspace{-1mm}
\caption{\LoopCond\, sketch.}
\label{fig:bench2}
\end{minipage}
\begin{minipage}[b]{0.4\linewidth}
$\begin{array}{@{} l @{}}
\hline
  \texttt{int} \ \texttt{a[5], b[5], c[5]}; \\
  \texttt{init} \ \{ \\
  \quad  \texttt{byte} \ \texttt{i, j, k}; \\
  \quad  \texttt{a[0]:=1}; \ldots \ \texttt{a[4]:=18}; \\
  \quad  \texttt{b[0]:=4}; \ldots \ \texttt{a[4]:=25}; \\
  \quad  \texttt{c[0]:=5}; \ldots \ \texttt{c[4]:=??}; \\
  \quad  \texttt{do} :: \texttt{a[i]<b[j]} \land \texttt{i<4} \to \texttt{i++};  \\
  \quad \quad \ :: \texttt{b[j]<c[k]} \land \texttt{j<4} \to \texttt{j++};  \\
  \quad \quad \ :: \texttt{c[k]<a[i]} \land \texttt{k<4} \to \texttt{k++};  \\
  \quad \quad \ :: \texttt{else} \to \texttt{break} \  \texttt{od}; \\
 \quad  \texttt{assert} (\texttt{a[i]=b[j]} \!\land\! \texttt{b[j]=c[k]} \\
 \quad \qquad \qquad \land \texttt{c[k]=a[i]}) \, \} \\
\hline
\end{array}$
\vspace{-1mm}
\caption{\Welfare\, sketch.}
\label{fig:bench3}
\end{minipage}
\end{figure*}

\begin{figure*}
\centering
\begin{minipage}[b]{0.49\linewidth}
	$\begin{array}{@{} l @{}}
\hline
    \texttt{byte} \ \texttt{N:=4, MAX, distance[16]}; \\
    \texttt{byte} \ \texttt{city, dest, tour, seen}; \\
    \texttt{bool} \ \texttt{visited[4]}; \\
    \texttt{\#define} \ \texttt{Dist(a,b)} \ \texttt{distance[4*a+b]} \\
    \texttt{inline travel2(dest)} \ \{ \\
    \quad  (\texttt{city != dest} \land \texttt{tour} \leq \texttt{MAX}) \to  \\
    \quad  \texttt{tour := tour + Dist(city,dest)}  \\
    \quad  \texttt{city := dest}  \\
    \quad  \texttt{if} :: (\neg \texttt{visited[city]}) \to \\
    \quad \qquad \qquad \qquad \ \texttt{visited[city]:=true}; \quad \ \\
    \quad \qquad \qquad \qquad \ \texttt{seen++} \\
    \quad \quad \ :: \texttt{else} \to \texttt{break} \ \  \texttt{fi}  \} \\
\hline
	\end{array}$
\vspace{-1mm}
\end{minipage}
\begin{minipage}[b]{0.49\linewidth}
$\begin{array}{@{} l @{}}
\hline
  \texttt{init} \ \{ \\
  \quad  \texttt{MAX:=??}; \\
  \quad  \texttt{Dist(0,1)=20}; \ldots \ \texttt{Dist(3,2)=12}; \\
  \quad  \texttt{do} :: \texttt{select (dest: 0 .. (N-1))} \to \\
  \quad \qquad \qquad \qquad \qquad \texttt{travel2(dest)}  \\
  \quad  \texttt{od}; \\
  \quad  \texttt{ltl p} \ \{ [\,] (\texttt{seen<N} \land \texttt{tour>MAX}) \, \} \\
  \} \\
\hline
\end{array}$
\vspace{-1mm}
\end{minipage}
\caption{\Salesman\, sketch.}
\label{fig:bench4}
\end{figure*}

\end{document}